\newtheorem{claim}{Claim}
\newtheorem{remark}{Remark}
\newtheorem{bound}{Bound}
\newcommand{\p}{{\rm P}}
\long\def\symbolfootnote[#1]#2{\begingroup
\def\thefootnote{\fnsymbol{footnote}}\footnote[#1]{#2}\endgroup}
\begin{document}

\title{A note on  outer bounds for broadcast channel}
\author{Chandra Nair\thanks{This result was presented as a part of the talk by the author in the International Zurich Seminar 2010. The purpose of this note is to serve as a documented proof of this fact.}}

\maketitle

\begin{abstract}
In this note we establish two facts concerning the so-called {\em New-Jersey} outer bound. We show that this outer bound is equivalent to a much simpler  {\em computable} region; and secondly we show that in the absence of private information this bound is exactly same as the $UV$-outerbound.
\end{abstract}
\section{Introduction}
Recently there has a flurry of activity on the outer bounds for the two-receiver broadcast channel. This work is an attempt to clean up the bounds and present a simple and clean picture. In this note we study the {\em New-Jersey} outer bound for
2-receiver discrete memoryless broadcast channel.   We show that this bound is equivalent to another bound (Bound \ref{bd:outer}). Further, the equivalent bound presented is also computable, i.e. the auxiliary random variables have bounded cardinalities, where as the original bound is not computable.

\begin{bound}
 \label{bd:outer3}
[New-Jersey region: \cite{lks08}] The closure of the union of rate
triples $(R_0,R_1,R_2)$ satisfying
{\small \begin{align*}
 R_0 &\leq \min \{I(T;Y|W_1), I(T;Z|W_2)\}\\
 R_1 &\leq I(U;Y|W_1)\\
 R_2 &\leq I(V;Z|W_2)\\
 R_0+R_1&\leq I(T,U;Y|W_1)\\
 R_0+R_1&\leq I(U;Y|T,W_1,W_2)+ I(T,W_1;Z|W_2)\\
 R_0+R_2&\leq I(T,V;Z|W_2)\\
 R_0+R_2&\leq I(V;Z|T,W_1,W_2)+ I(T,W_2;Y|W_1)\\
 R_0+R_1+R_2&\leq I(U;Y|T,V,W_1,W_2)+ I(T,V,W_1;Z|W_2)\\
 R_0+R_1+R_2&\leq I(V;Z|T,U,W_1,W_2)+ I(T,U,W_2;Y|W_1)\\
 R_0+R_1+R_2&\leq I(U;Y|T,V,W_1,W_2)+ I(T,W_2;Y|W_1)+ I(V;Z|T,W_1,W_2)\\
 R_0+R_1+R_2&\leq I(V;Z|T,U,W_1,W_2)+ I(T,W_1;Z|W_2)+ I(U;Y|T,W_1,W_2)
\end{align*}}
for some $p(u)p(v)p(t)p(w_1,w_2|u,v,t)p(x|u,v,t,w_1,w_2)p(y,z|x)$
constitutes an outer bound. Further one can restrict $X$ to be a
deterministic function of $(u,v,t,w_1,w_2)$ and $U,V,T$ are
uniformly distributed.
\end{bound}

\begin{remark}
The follow two points are worth noting:
\begin{itemize}
\item In \cite{lks08}, the authors note that Bound \ref{bd:outer3} is at least as good as the previously known bounds; however it is not clear if it is any better. It turns out that when $R_0=0$, this note will prove that this is no better than a previously known bound.
\item The bound presented above differs slightly from the {\em New-Jersey} region\cite{lks08}. This is based on comments made to the author by Amin Gohari (based on his joint observation with Venkat Anantharam). Firstly, the region is the closure of the rate pairs (apriori, in the absence of cardinality bounds, it is not clear that the union is closed), and secondly the terms $I(T,W_2;Y|W_1)$, $I(T,W_1;Z|W_2)$ replace the terms $ I(T,W_1,W_2;Y)$ and $I(T,W_1,W_2;Z)$ respectively in the last two inequalities. The same argument in the New Jersey bound shows that this expression is also a valid outer bound (and at least as good as the New Jersey bound). 
\end{itemize}
\end{remark}
\medskip

Now consider the following much simpler region.

\medskip
\begin{bound}
 \label{bd:outer2}
The union of rate triples $(R_0,R_1,R_2)$ that satisfy the following
inequalities
{\small\begin{align*}
 R_0 &\leq \min \{I(W;Y), I(W;Z)\}\\
 R_0+R_1 &\leq I(U;Y|W) + \min \{I(W;Y), I(W;Z)\}\\\
 R_0 + R_2 &\leq I(V;Z|W)+ \min \{I(W;Y), I(W;Z)\}\\ 
 R_0+R_1+ R_2 &\leq \min \{I(W;Y), I(W;Z)\} + I(U;Y|V,W)+ I(V;Z|W)\\
 R_0+R_1+ R_2 &\leq \min \{I(W;Y), I(W;Z)\} + I(U;Y|W)+ I(V;Z|U,W)
\end{align*}}
over all $p(u,v,w,x)$ such that $(U,V,W) \to X \to (Y,Z)$
forms a Markov chain.
\end{bound}
\begin{remark}
We make the following simple observations:
\begin{itemize}
\item Setting $W' = (T,W_1,W_2), U' = U$ and $V' =V$ into Bound \ref{bd:outer2} we find that
Bound \ref{bd:outer3} $\subseteq$ Bound \ref{bd:outer2}.
\item Bound \ref{bd:outer2} can be directly obtained as an outer bound following standard manipulations.
\end{itemize}
\end{remark}

In the next section we will evaluate Bound \ref{bd:outer3} and show
the other non-trivial direction that Bound \ref{bd:outer2}
$\subseteq$ Bound \ref{bd:outer3}.

\section{The equivalence between Bound \ref{bd:outer3} and Bound \ref{bd:outer2}}
\label{se:genrem}

Clearly from the remark it suffices to show that Bound \ref{bd:outer2}
$\subseteq$ Bound \ref{bd:outer3}. The main idea is borrowed from a trick in \cite{naw08}.
We show that the constraints on the auxiliary random variables described by the following:
\begin{itemize}
\item the union over distributions $p(u)p(v)p(t)p(w_1,w_2|u,v,t)p(x|u,v,t,w_1,w_2)p(y,z|x)$
\item Restrict $X$ to be a
deterministic function of $(u,v,t,w_1,w_2)$ and $U,V,T$ are
uniformly distributed
\end{itemize}
 are  in effect  {\em red herrings}, in the sense that even if we take the union over
all $p(u,v,t, w_1, w_2,x)p(y,z|x)$, we arrive at the same region. That is there is no real advantage 
in making our distributions more restrictive in the above sense; further, these restrictions prevent us from 
establishing cardinality bounds on the region.

Denote the region obtained by Bound \ref{bd:outer3} by $\mathcal{R}$ and let
$\mathcal{R}_1$ be the region obtained by the same constraints  but with the union taken over 
$p(u,v,t, w_1, w_2,x)$, such that $X$ is a function of $(U,V,T,W_1,W_2)$.
 We will show these two regions are identical.

\medskip
\begin{claim}
$\mathcal{R}=\mathcal{R}_1$.
\end{claim}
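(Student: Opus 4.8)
Since every distribution admissible in Bound~\ref{bd:outer3} is also admissible in the definition of $\mathcal{R}_1$, the inclusion $\mathcal{R}\subseteq\mathcal{R}_1$ is immediate, so the plan is to prove $\mathcal{R}_1\subseteq\mathcal{R}$. Fix $(R_0,R_1,R_2)\in\mathcal{R}_1$, certified by some $q(u,v,t,w_1,w_2)$ with $X=f(u,v,t,w_1,w_2)$ deterministic. The goal is to build auxiliaries $U',V',T',W_1',W_2'$ with $U',V',T'$ mutually independent and uniform, with $X$ still a deterministic function of $(U',V',T',W_1',W_2')$, and with all eleven inequalities of Bound~\ref{bd:outer3} still holding at $(R_0,R_1,R_2)$ when evaluated with the primed variables.

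The first reduction is cheap and rests on an observation used repeatedly: in Bound~\ref{bd:outer3} the variable $W_1$ occurs only together with $Y$ and $W_2$ only together with $Z$, so replacing $W_1$ by $(W_1,N)$ and $W_2$ by $(W_2,N)$ for an $N$ independent of $(U,V,T,W_1,W_2)$ does not decrease any of the eleven right-hand sides (each change is, after a chain-rule expansion, a nonnegative conditional mutual information). Taking $N=\Theta$, the internal randomness of a stochastic encoder $X=g(U,V,T,W_1,W_2,\Theta)$, reduces the problem to a deterministic encoder. It therefore remains to manufacture independence of $U,V,T$ with the encoder deterministic; and once independent (not necessarily uniform) auxiliaries have been produced, uniformity is obtained for free, by replacing them with $(\rho_U(\bar U),\rho_V(\bar V),\rho_T(\bar T))$ for independent uniform $\bar U,\bar V,\bar T$ and deterministic $\rho_U,\rho_V,\rho_T$ realising the required marginals, and adopting $\bar U,\bar V,\bar T$ as the auxiliaries -- which changes no distribution and, by data processing, only enlarges the relevant right-hand sides.

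The independence step is the crux and uses the trick of \cite{naw08}. One cannot simply keep $U,V,T$ and ``move their coupling'' into $T,W_1,W_2$, since the new $U',V',T'$ must be independent of those very variables; instead one discards $U,V,T$ and replaces them by independent uniform proxies, redesigning the (deterministic) encoder so that, aided by fresh randomness and a time-sharing variable placed inside $T',W_1',W_2'$, it reproduces the behaviour of $q$. The simplest instance already shows the mechanism: if $q$ has $V=U$, the region it certifies is the triangle $R_1+R_2\le\min\{I(U;Y),I(U;Z)\}$, which is recovered in $\mathcal{R}$ by taking $U',V'$ independent and uniform, a time-sharing bit $Q$ with $\Pr\{Q=1\}=\lambda$ placed in $T'=W_1'=W_2'=Q$, and an encoder forming $X$ from $U'$ alone when $Q=1$ and from $V'$ alone when $Q=0$; for a suitable $\lambda$ every one of the eleven right-hand sides is at least $R_1+R_2$. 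The general case is the analogous, considerably more elaborate, seed-plus-time-sharing construction; the eleven inequalities are then checked by the chain rule together with the $W_1$--$Y$ / $W_2$--$Z$ pairing and the determinism of the encoder. Since a single $q$ will in general require a whole family of such New-Jersey distributions -- one tuned to each rate point -- one concludes by invoking that $\mathcal{R}$ is a \emph{closed} union.

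The main obstacle is exactly this construction. Because $U',V',T'$ must be independent, the original correlated auxiliaries cannot be retained, so the new information quantities are genuinely different from the old ones and must be shown to dominate them; devising one construction that simultaneously keeps the encoder deterministic, makes $U',V',T'$ independent and uniform, and shrinks none of the eleven right-hand sides -- rather than merely checking a fixed list of inequalities -- is where essentially all of the work lies.
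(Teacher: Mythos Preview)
Your proposal has a genuine gap: the step you yourself call ``the crux'' is not carried out. You give a toy example ($V=U$) and then assert that ``the general case is the analogous, considerably more elaborate, seed-plus-time-sharing construction,'' but you do not produce that construction, nor do you verify the eleven inequalities for it. Your closing paragraph even concedes that ``devising one construction \ldots\ is where essentially all of the work lies.'' As written, the proposal is a plan for a proof, not a proof. Incidentally, the auxiliary observation you rely on --- that ``$W_1$ occurs only together with $Y$ and $W_2$ only together with $Z$'' --- is false in Bound~\ref{bd:outer3}: terms such as $I(T,W_1;Z|W_2)$ and $I(T,V,W_1;Z|W_2)$ pair $W_1$ with $Z$. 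For $N$ independent of everything the conclusion you draw happens to survive, but for a different reason than the one you state.

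The paper avoids all of this with a single explicit construction that preserves every mutual information term \emph{exactly}. Given $(U,V,T,W_1,W_2)$ with alphabets $\{0,\dots,m_1-1\}$, $\{0,\dots,m_2-1\}$, $\{0,\dots,m_3-1\}$ for $U,V,T$, introduce independent uniform shifts $i,j,k$ over these alphabets, set
\[
U^{*}=U-i \pmod{m_1},\quad V^{*}=V-j \pmod{m_2},\quad T^{*}=T-k \pmod{m_3},
\]
and fold the shifts into the $W$'s via $W_1^{*}=(W_1,i,k)$, $W_2^{*}=(W_2,j,k)$. Then $U^{*},V^{*},T^{*}$ are mutually independent and uniform, $X$ remains a deterministic function of the starred tuple, and a direct computation of the relevant joint distributions shows that every one of the eleven information quantities is unchanged (not merely not decreased). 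Hence one distribution in $\mathcal{R}$ certifies the \emph{same} rate triple; no time-sharing, no family of distributions tuned to each rate point, and no appeal to closure is needed. This is the missing idea in your proposal.
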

\medskip

\begin{proof}
Clearly, $\mathcal{R} \subseteq \mathcal{R}_1$. Therefore it
suffices to show the non-trivial direction.

Given a $(U,V,T,W_1,W_2)$,  let $\mathcal{U} = \{0,1,...,m_1-1\},$ $\mathcal{V} =
\{0,1,...,m_2-1\},$ and $\mathcal{T} =
\{0,1,...,m_3-1\}$. Define new random variables
$U^*,V^*,T^*,W_1^*,W_2^*$ and a distribution
$p(u^*,v^*,t^*,w_1^*,w_2^*,x)$ according to
 {\small \begin{align}
  &\p(U^*=u, V^* = v, T^* = t,W_1^* = (w_1,i,k), W_2^* = (w_2,j,k), X^*=x) \nonumber \\
   & \quad= \frac{1}{m_1m_2m_3} \p(U=(u+i)_{m_1}, V = (v+j)_{m_2}, T = (t+k)_{m_3},W_1 = w_1, W_2 = w_2, X=x) \label{eq:eq0}
 \end{align}}
where $(\cdot)_{m_i}$ denotes the $\mod$ operation, and $i,j,k$ takes values in $[0:m_1-1], [0:m_2-1], [0:m_3-1]$ respectively.
Note that if $X$ is a function of $(U,V,T,W_1,W_2)$, then  $X^*$ is a function of $(U^*,V^*,T^*,W_1^*,W_2^*)$.

It is straightforward to check the following:
{\small \begin{align}
 \p(U^*=u,V^*=v,T^*=t) &=\frac{1}{m_1m_2m_3}  \label{eq:eq1}\\
 &\quad \mbox{ hence independent and uniformly distributed.} \nonumber \\
 \p(T^*=t, W_1^*=(w_1,i,k), X=x) &= \frac{1}{m_1 m_3} \p(T=(t+k)_{m_3},W_1=w_1,X=x)\label{eq:eq2}\\
 \p(T^*=t, W_2^*=(w_2,j,k), X=x) &= \frac{1}{m_2 m_3} \p(T=(t+k)_{m_3},W_2=w_2,X=x)\label{eq:eq3}\\
 \p(U^*=u,W_1^*=(w_1,i,k),X=x) &= \frac{1}{m_1m_3} \p(U=(u+i)_{m_1}, W_1=w_1,X=x)\label{eq:eq4}\\
 \p(V^*=v,W_2^*=(w_2,j,k),X=x) &= \frac{1}{m_2 m_3} \p(V=(v+j)_{m_2},
 W_2=w_2,X=x)\label{eq:eq5}\\
 \p(U^*=u,T^*=t, W_1^*=(w_1,i,k),X=x) &= \frac{1}{m_1m_3} \p(U=(u+i)_{m_1}, T=(t+k)_{m_3},W_1=w_1,X=x)\label{eq:eq6}\\
 \p(V^*=v,T^*=t, W_2^*=(w_2,j,k),X=x) &= \frac{1}{m_2m_3} \p(V=(v+j)_{m_2}, T=(t+k)_{m_3},W_2=w_2,X=x)\label{eq:eq7}.
\end{align}}

Similarly, one also obtains
{\small \begin{align}
 & \p(T^*=t, W_1^*=(w_1,i,k),W_2^*=(w_2,j,k),X=x) \nonumber\\
&\qquad = \frac{1}{m_1m_2m_3} \p(T=(t+k)_{m_3},W_1=w_1,W_2=w_2,X=x)\label{eq:eq8}\\
& \p(U^*=u, T^*=t, W_1^*=(w_1,i,k),W_2^*=(w_2,j,k),X=x) \nonumber \\
 & \qquad = \frac{1}{m_1m_2m_3} \p(U=(u+i)_{m_1},T=(t+k)_{m_3},W_1=w_1,W_2=w_2,X=x)\label{eq:eq9} \\
 &\p(V^*=v, T^*=t, W_1^*=(w_1,i,k),W_2^*=(w_2,j,k),X=x) \nonumber \\
 & \qquad = \frac{1}{m_1m_2m_3} \p(V=(v+j)_{m_2},T=(t+k)_{m_3},W_1=w_1,W_2=w_2,X=x) \label{eq:eq10}
\end{align}}

From the above it follows a straightforward manner that the following equalities hold:
{\small \begin{align*}
I(T^*;Y^*|W_1^*)&= I(T;Y|W_1) ~ (\mbox{from}~\eqref{eq:eq2})\\
I(T^*;Z^*|W_2^*)&= I(T;Z|W_2) ~ (\mbox{from}~\eqref{eq:eq3})\\
I(U^*;Y^*|W_1^*)&=I(U;Y|W_1) ~ (\mbox{from}~\eqref{eq:eq4})\\
I(V^*;Z^*|W_2^*)&=I(V;Z|W_2) ~ (\mbox{from}~\eqref{eq:eq5})\\
I(T^*, U^*;Y^*|W_1^*)&=I(T, U;Y|W_1)  ~ (\mbox{from}~\eqref{eq:eq6})\\
I(T^*, V^*;Z^*|W_2^*)&=I(T, V;Z|W_2)  ~ (\mbox{from}~\eqref{eq:eq7})\\
I(T^*,W_1^*,W_2^*;Y^*)&=I(T,W_1,W_2;Y)   ~ (\mbox{from}~\eqref{eq:eq8})\\
I(T^*,W_1^*,W_2^*;Z^*)&=I(T,W_1,W_2;Z)   ~ (\mbox{from}~\eqref{eq:eq8})\\
I(T^*,W_2^*;Y^*|W_1^*)&=I(T,W_2;Y|W_1) ~ (\mbox{from}~\eqref{eq:eq8}) \\
I(T^*,W_1^*;Z^*|W_2^*)&=I(T,W_1;Z|W_2) ~ (\mbox{from}~\eqref{eq:eq8}) \\
I(U^*;Y^*|T^*,W_1^*,W_2^*)&=I(U;Y|T,W_1,W_2) ~ (\mbox{from}~\eqref{eq:eq9})\\
I(V^*;Z^*|T^*,W_1^*,W_2^*)&=I(V;Z|T,W_1,W_2) ~ (\mbox{from}~\eqref{eq:eq10})\\
I(U^*;Y^*|T^*,V^*,W_1^*,W_2^*)&=I(U;Y|T,V,W_1,W_2) ~ (\mbox{from}~\eqref{eq:eq0}, \eqref{eq:eq10})\\
I(V^*;Z^*|T^*,U^*,W_1^*,W_2^*)&=I(V;Z|T,U,W_1,W_2)  ~ (\mbox{from}~\eqref{eq:eq0}, \eqref{eq:eq9})
\end{align*}}

Hence all the terms that appears in bound are preserved, and hence $\mathcal{R}_1 \subseteq \mathcal{R}$. Hence the region
described by Bound \ref{bd:outer3} is indeed the same as that described by the same expression, if we remove the structure on $(U,V,T,W_1,W_2)$..
\end{proof}
\medskip

Given $(U,V,W,X)$ such that $X$ is a function of $U,V,W$, 
set $U' = (U,W), V' = (V,W), T' =W, W_1' = \emptyset$ and $W_2' =
\emptyset$. Plugging this choice into the constraints we see that the following region is a subset of $\mathcal{R}_1$,
{\small \begin{align}
 R_0 &\leq \min \{I(W;Y), I(W;Z)\} \nonumber\\
 R_1 &\leq I(U,W;Y) \label{eq:redu1}\\
 R_2 &\leq I(V,W;Z) \label{eq:redu2}\\
 R_0+R_1 &\leq I(U,W;Y) \nonumber\\
 R_0+R_1 &\leq I(U;Y|W) + I(W;Z) \nonumber\\
 R_0 + R_2 &\leq I(V,W;Z) \nonumber \\
 R_0 + R_2 &\leq I(V;Z|W)+ I(W;Y) \nonumber \\
 R_0+R_1+ R_2 &\leq I(U,W;Y)+ I(V;Z|U,W)\nonumber \\
 R_0+R_1+ R_2 &\leq I(V,W;Z)+ I(U;Y|V,W)\nonumber \\
 R_0+R_1+ R_2 &\leq I(W;Y)+ I(U;Y|V,W)+ I(V;Z|W)\nonumber \\
 R_0+R_1+ R_2 &\leq I(W;Z)+ I(U;Y|W)+ I(V;Z|U,W) \nonumber
\end{align}}
Obviously the inequalities \ref{eq:redu1} and \ref{eq:redu2} are
redundant (due to non-negativity of rates. Thus the region described by Bound \ref{bd:outer2} (with the additional constraint that
$X$ is a function of $U,V,W$) is a subset of  $\mathcal{R}_1$. 

\begin{remark}
However it is easy to see\footnote{standard argument: set $U'=U,Q$ where $Q$ is independent of $(U,V,W)$ such that $X$ is a function of $U,V,W,Q$ always increases(may not be strict) the region.} that the constraint 
$X$ is a function of $U,V,W$ still yields the entire region described by Bound \ref{bd:outer2}.
\end{remark}

Thus we have completed the proof of the other direction, i.e. we have shown that Bound \ref{bd:outer2} $\subseteq$
$\mathcal{R}_1= \mathcal{R}$, that given by Bound \ref{bd:outer3}). This shows that the regions given by the two bounds, Bound \ref{bd:outer3} and Bound \ref{bd:outer2}, are equal.

\section{UVW-Outer bound}

From the previous section, it is clear that the following computable region is equivalent to the New-Jersey outer bound.

\begin{bound}
 \label{bd:outer} [The UVW outer bound]
The union of rate triples $(R_0,R_1,R_2)$ that satisfy the following
inequalities
\begin{align*}
 R_0 &\leq \min \{I(W;Y), I(W;Z)\}\\
 R_0+R_1 &\leq I(U;Y|W) + \min \{I(W;Y), I(W;Z)\}\\\
 R_0 + R_2 &\leq I(V;Z|W)+ \min \{I(W;Y), I(W;Z)\}\\ 
 R_0+R_1+ R_2 &\leq \min \{I(W;Y), I(W;Z)\} + I(X;Y|V,W)+ I(V;Z|W)\\
 R_0+R_1+ R_2 &\leq \min \{I(W;Y), I(W;Z)\} + I(U;Y|W)+ I(X;Z|U,W)
\end{align*}
over all $p(u,v,w)p(x|u,v,w)$ such that $(U,V,W) \to X \to (Y,Z)$
forms a Markov chain. Further  it suffices to consider $|W| \leq |X| + 5, |U| \leq |X| + 1, |W| \leq |X| + 1$.
\end{bound}

\begin{remark}
The replacement of $I(U;Y|V,W)$ with $I(X;Y|V,W)$ and $I(V;Z|U,W)$ with
$I(X;Z|U,W)$ follows from the fact that we can assume that $X$ is a function of $U,V,W$ (w.l.o.g.). 
The proof of the cardinality bounds are established below.
\end{remark}

\subsection{Computability of the outer bound}

\begin{claim}
To compute the region in Bound \ref{bd:outer} it suffices to consider $|W| \leq |X| + 5, |U| \leq |X| + 1, |W| \leq |X| + 1$.
\end{claim}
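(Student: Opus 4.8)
The plan is to establish the cardinality bounds by the standard perturbation / Carathéodory argument applied in two stages, first bounding $|W|$ and then, conditioned on each value of $W$, bounding $|U|$ and $|V|$ separately.

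\medskip

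\noindent\textbf{Proof plan.}
First I would fix a distribution $p(u,v,w)p(x|u,v,w)$ achieving a point on the boundary of the region in Bound~\ref{bd:outer}. To bound $|W|$, consider the collection of conditional distributions $\{p(u,v,x\,|\,w)\}_{w\in\mathcal W}$, and note that each auxiliary $w$ contributes to the region only through a finite list of functionals of $p(u,v,x|w)$. Concretely, I would use the Fenchel--Eggleston--Carathéodory theorem: I need to preserve (i) the marginal $p(x)$, which costs $|\mathcal X|-1$ constraints, and (ii) the handful of conditional mutual-information quantities that appear in the five inequalities once the outer $\min\{I(W;Y),I(W;Z)\}$ terms are split as $H(Y)-H(Y|W)$ etc. Writing every bound in the form ``function of $p(x)$'' plus ``an average over $w$ of a function of $p(x|w)$'', the quantities that must be held fixed under averaging over $W$ are $H(Y|W)$, $H(Z|W)$, $I(U;Y|W)$, $I(V;Z|W)$, $I(X;Y|V,W)+I(V;Z|W)$ — but since $I(V;Z|W)$ is already on the list this is really $I(X;Y|V,W)$ — and symmetrically $I(X;Z|U,W)$. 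That is $6$ functionals; together with the $|\mathcal X|-1$ needed to pin down $p(x)$ (which controls $H(Y)$ and $H(Z)$ simultaneously, so both $\min$-terms are handled), one gets $|\mathcal W|\le |\mathcal X|+5$, matching the claimed bound. The resulting $W$ takes finitely many values while every expression in Bound~\ref{bd:outer} is unchanged, so the boundary point is retained.

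\medskip

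With $|\mathcal W|$ now finite, I would bound $|\mathcal U|$ and $|\mathcal V|$ by a second, conditional application of the same theorem. Fix $w$; consider the family $\{p(v,x\,|\,u,w)\}_{u}$. Replacing $p(u|w)$ while keeping, for each fixed $w$, the marginal $p(x|w)$ and the relevant information quantities that depend on how $U$ refines things — namely $H(Y|U,W)$ (equivalently $I(U;Y|W)$), the term $I(X;Z|U,W)$, and $H(Y|V,W)$-type quantities are untouched by changing the $U$-structure — we must preserve $p(x|w)$ ($|\mathcal X|-1$ constraints) plus the quantity $I(U;Y|W=w)$ and the quantity $I(X;Z|U,W=w)$ and the joint $p(v,x|w)$ enough to keep $V$'s contributions fixed. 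Here one must be careful: changing $p(u|w)$ can disturb the joint distribution $p(v,x|w)$ unless we fix $p(v,x|w)$ as well, which is $|\mathcal V||\mathcal X|-1$ further constraints — too many. The correct route, as in the references' technique, is to do the $U$-reduction holding $p(v,x|w)$ fixed by working with the conditional law $p(u\,|\,v,x,w)$ appropriately, or, more simply, to reduce $U$ against only the single scalar $I(X;Z|U,W=w)$ (and automatically $p(x|w)$), which the last inequality needs, noticing that the remaining four inequalities do not involve $U$ except through $I(U;Y|W)$ which is itself then handled — giving $|\mathcal U|\le |\mathcal X|+1$; then symmetrically $|\mathcal V|\le |\mathcal X|+1$. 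Averaging these reduced conditional laws back over $w$ recovers a global distribution with all of Bound~\ref{bd:outer}'s expressions preserved.

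\medskip

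\noindent\textbf{Main obstacle.}
The delicate point is the second stage: unlike the $W$-reduction, reducing $|\mathcal U|$ while $V$ is still present risks changing the joint $p(u,v,x|w)$ in a way that corrupts the $V$-terms ($I(V;Z|W)$, $I(V;Z|U,W)$, $I(X;Y|V,W)$). I expect the resolution is that the inequalities are asymmetric in $U$ and $V$ in exactly the right way — the only places $U$ and $V$ appear jointly are $I(X;Y|V,W)$ (no $U$) and $I(X;Z|U,W)$ (no $V$) — so one can reduce $U$ holding $\{p(v,x|w)\}$ and $I(X;Z|U,W=w)$ fixed, then reduce $V$ holding $\{p(u,x|w)\}$ and $I(X;Y|V,W=w)$ fixed, each a clean one-variable Carathéodory step. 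Verifying that these two reductions are mutually consistent (the second does not undo the first) is the part that needs genuine care rather than routine bookkeeping.
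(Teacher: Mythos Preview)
Your two-stage Carath\'eodory plan is exactly the paper's approach, and your $W$-stage is correct: the six functionals you list are equivalent (once $p(x)$ is fixed) to the paper's list $H(Y|W),H(Z|W),H(Y|U,W),H(Z|U,W),H(Y|V,W),H(Z|V,W)$, yielding $|\mathcal W|\le|\mathcal X|+5$.

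The gap is in the second stage. You correctly notice that no inequality in Bound~\ref{bd:outer} contains $U$ and $V$ simultaneously, but you then draw the wrong operational conclusion from it. Your proposed fix, ``reduce $U$ holding $\{p(v,x|w)\}$ fixed,'' would impose $|\mathcal V|\,|\mathcal X|-1$ linear constraints and cannot deliver $|\mathcal U|\le|\mathcal X|+1$; and the earlier line ``reduce $U$ against only the single scalar $I(X;Z|U,W=w)$ \ldots $I(U;Y|W)$ is itself then handled'' is not right either, since $I(U;Y|W=w)$ is \emph{not} determined by $p(x|w)$ and $H(Z|U,W=w)$.

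The clean way to exploit your observation is this: because every term in Bound~\ref{bd:outer} depends only on the marginal $p(u,w,x)$ or only on the marginal $p(v,w,x)$, the region is a function of these two marginals (which share the common $p(w,x)$), not of the full joint $p(u,v,w,x)$. Hence, for each fixed $w$, apply Carath\'eodory to the family $\{p(x|u,w)\}_{u}$ weighted by $p(u|w)$, preserving $p(x|w)$ together with $H(Y|U,W=w)$ and $H(Z|U,W=w)$; this is $(|\mathcal X|-1)+2$ constraints and gives $|\mathcal U|\le|\mathcal X|+1$. Do the symmetric reduction on $\{p(x|v,w)\}_{v}$. Finally reconstitute a joint by declaring $U\perp V\mid(W,X)$; since both reductions preserved $p(x|w)$, the two marginals agree on $p(w,x)$, so this coupling is well defined and every expression in Bound~\ref{bd:outer} is unchanged. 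This is precisely what the paper does (tersely): it reduces $U$ and $V$ separately against $p(X|W=w),\,H(Y|\cdot,W=w),\,H(Z|\cdot,W=w)$ and leaves the recombination implicit. With this correction your proof goes through, and the ``mutual consistency'' worry dissolves, because the two reductions act on disjoint coordinates subject only to a shared $p(w,x)$ that each of them preserves.
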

\begin{proof}
The proof of this claim follows from standard arguments: i.e. the Fenchel-Bunt extension to Caratheodory's theorem. We require $W'$ to preserve the distribution of $X$, and the values of $H(Y|W), H(Z|W)$, $H(Y|U,W), H(Z|U,W)$, $H(Y|V,W), H(Z|V,W)$. Hence we can choose a $W$ of cardinality at most  $|X| + 5$, and distributions $p(u,v,x|w)$ so as to preserve these quantities. (Observe that to preserve the distribution of $X$ we only need $|X| - 1$ constraints.) Conditioned $W=w$, we can choose $U$ of size $|X|+1$ so that $p(X|W=w), H(Y|U,W=w), H(Z|U,W=w)$ is preserved. Similarly, we can choose a $V$ of size $|X|+1$ so that $p(X|W=w), H(Y|V,W=w), H(Z|V,W=w)$ is preserved. Hence to compute the region, we can make the restrictions as stated in the claim. This makes the outer bound (Bound \ref{bd:outer}) computable.
\end{proof}

\subsection{Private messages outer bound}

In this section we study the outer bound for the case $R_0=0$ (there is no common message).

\begin{claim}
Whe $R_0=0$, then Bound 3 is equivalent to an earlier bound (UV-outerbound \cite{nae07}), which states that the union of rate triples $(R_1,R_2)$ that satisfy the following
inequalities
\begin{align*}
 R_1 &\leq I(U;Y) \\
R_2 &\leq I(V;Z)\\ 
R_1+ R_2 &\leq  I(X;Y|V)+ I(V;Z)\\
R_1+ R_2 &\leq  I(U;Y)+ I(X;Z|U)
\end{align*}
over all $p(u,v)p(x|u,v)$ such that $(U,V) \to X \to (Y,Z)$
forms a Markov chain form an outer bound to the capacity region.
\end{claim}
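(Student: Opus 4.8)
The plan is to show the two inclusions separately. For the direction ``UV-outerbound $\subseteq$ Bound \ref{bd:outer} with $R_0=0$'', I would simply specialize Bound \ref{bd:outer}: given $(U,V,X)$ with $(U,V)\to X\to(Y,Z)$, set $W=\emptyset$. Then $I(W;Y)=I(W;Z)=0$, and the five inequalities of Bound \ref{bd:outer} collapse to $0\le 0$, $R_1\le I(U;Y)$, $R_2\le I(V;Z)$, $R_1+R_2\le I(X;Y|V)+I(V;Z)$, and $R_1+R_2\le I(U;Y)+I(X;Z|U)$, which are exactly the UV-outerbound constraints. Hence every rate pair in the UV-outerbound lies in Bound \ref{bd:outer} at $R_0=0$.

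The substantive direction is ``Bound \ref{bd:outer} with $R_0=0$ $\subseteq$ UV-outerbound''. Here I would take an arbitrary $(U,V,W,X)$ with $(U,V,W)\to X\to(Y,Z)$ and an $(0,R_1,R_2)$ satisfying the five inequalities, and produce new auxiliaries achieving the same (or larger) rate pair in the UV-outerbound. The key observation is that the $R_0$-term $\min\{I(W;Y),I(W;Z)\}$ should be ``absorbed'' into $W$: I expect to set $U' = (U,W)$ and $V' = (V,W)$, so that $(U',V')\to X\to(Y,Z)$ still holds. Then $I(U';Y) = I(U,W;Y) = I(W;Y) + I(U;Y|W) \ge \min\{I(W;Y),I(W;Z)\} + I(U;Y|W) \ge R_1$ using the second inequality of Bound \ref{bd:outer} (with $R_0=0$), and symmetrically $I(V';Z) \ge R_2$. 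For the sum-rate bounds I would check $I(X;Y|V') + I(V';Z) = I(X;Y|V,W) + I(W;Z) + I(V;Z|W) \ge \min\{I(W;Y),I(W;Z)\} + I(X;Y|V,W) + I(V;Z|W) \ge R_1+R_2$, matching the fourth inequality of Bound \ref{bd:outer}; and dually $I(U';Y) + I(X;Z|U') = I(W;Y) + I(U;Y|W) + I(X;Z|U,W) \ge R_1 + R_2$ from the fifth inequality. Thus $(R_1,R_2)$ lies in the UV-outerbound evaluated at $(U',V',X)$.

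The one point that needs a little care — and what I expect to be the mild obstacle — is the asymmetry between the two $\min$ terms: in the fourth inequality of Bound \ref{bd:outer} the common-message term can be either $I(W;Y)$ or $I(W;Z)$, but after the substitution $V'=(V,W)$ the natural bound $I(X;Y|V')+I(V';Z)$ only reproduces the $I(W;Z)$ choice. Since $\min\{I(W;Y),I(W;Z)\} \le I(W;Z)$, the inequality still goes through in the needed direction, so this is not a real problem; one just records that replacing the $\min$ by either argument only weakens the constraint, and the chosen substitution handles each sum-rate inequality with the ``correct'' side of the $\min$. Combining the two inclusions gives the claimed equivalence, and I would close by noting (as the surrounding text already does via Bound \ref{bd:outer} being equivalent to the New-Jersey region) that this identifies the New-Jersey bound at $R_0=0$ with the previously-known UV-outerbound of \cite{nae07}.
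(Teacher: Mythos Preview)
Your proposal is correct and follows exactly the paper's approach: the paper also sets $W=\emptyset$ for the inclusion UV-outerbound $\subseteq$ Bound \ref{bd:outer}, and sets $U'=(U,W)$, $V'=(V,W)$ for the reverse inclusion. Your write-up is in fact more detailed than the paper's, since you verify each inequality explicitly and note how the $\min$ is handled, whereas the paper simply asserts that the substitutions yield the containment.
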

\begin{proof}
Given a $(U,V,X)$ and the corresponding region in the UV-outer bound, by setting $W =\emptyset, U=U, V=V$ we see that this region is contained in the region described by Bound \ref{bd:outer}. Hence UV-outer bound $\subseteq$ Bound \ref{bd:outer}.

Given a $(U,V,W,X)$ and the corresponding region in Bound \ref{bd:outer}, by setting
$U = (U,W); V = (V,W)$ we see that this region (in Bound \ref{bd:outer}) is contained in the region described by the UV-outer bound. Hence UV-outer bound $\supseteq$ Bound \ref{bd:outer}.

Hence the two regions are equivalent.
\end{proof}

\section{Conclusion}

In this work we summarize the state of the various recent attempts at writing an outer bound for the two-receiver broadcast channel. It is not clear whether the UVW outer bound in Bound \ref{bd:outer} is any better than the one presented in \cite{nae07} (it is at least as good). The main contribution here is to collect all the developments on the outer bound, and present a simple bound that is as good as all the outer bounds currently developed. More importantly, we also make the outer bound region computable.

Hopefully this note can be useful to the researchers and to the community in general while trying to figure out the relationship between the bounds.

\section*{Acknowledgements}

The author wishes to thank Vincent Wang, Amin Gohari, and Abbas El Gamal; all of whom played a direct or indirect role in the preparation of this note.

\bibliographystyle{amsplain}

\end{document}